\newtheorem{theorem}{Theorem}
\newtheorem{lemma}[theorem]{Lemma}
\newcommand{\BibTeX}{B\kern-.05em{\sc i\kern-.025em b}\kern-.08em\TeX}
\begin{document}


\begin{frontmatter}


\paperid{123} 


\title{Deep Learning and Explainable AI: New Pathways to Genetic Insights}

\author[a]{\fnms{Chenyu}~\snm{Wang}\footnote[\dagger]{Contributed equally to this work.}}
\author[a]{\fnms{Chaoying}~\snm{Zuo}\footnotemark[\dagger]}
\author[a]{\fnms{Zihan}~\snm{Su}\footnotemark[\dagger]}
\author[b]{\fnms{Yuhang}~\snm{Xing}}
\author[a]{\fnms{Lu}~\snm{Li}}
\author[a]{\fnms{Maojun}~\snm{Wang}\footnote[*]{Corresponding authors: Maojun Wang (Email:mjwang@mail.hzau.edu.cn) and Zeyu Zhang (Email:zhangzeyu@mail.hzau.edu.cn).}}
\author[a]{\fnms{Zeyu}~\snm{Zhang}\footnotemark[*]}

\address[a]{National Key Laboratory of Crop Genetic Improvement, Hubei Hongshan Laboratory, Huazhong Agricultural University, 430070, Hubei, China}

\address[b]{Agricultural College, Shihezi University, 832002, Xinjiang, China}





\begin{abstract}
Deep learning-based AI models have been extensively applied in genomics, achieving remarkable success across diverse applications. As these models gain prominence, there exists an urgent need for interpretability methods to establish trustworthiness in model-driven decisions. For genetic researchers, interpretable insights derived from these models hold significant value in providing novel perspectives for understanding biological processes. Current interpretability analyses in genomics predominantly rely on intuition and experience rather than rigorous theoretical foundations. In this review, we systematically categorize interpretability methods into input-based and model-based approaches, while critically evaluating their limitations through concrete biological application scenarios. Furthermore, we establish theoretical underpinnings to elucidate the origins of these constraints through formal mathematical demonstrations, aiming to assist genetic researchers in better understanding and designing models in the future. Finally, we provide feasible suggestions for future research on interpretability in the field of genetics. 
\end{abstract}
\end{frontmatter}

\section{Introduction}
Deep learning has demonstrated powerful modeling and analytical capabilities in 3D genomics and regulatory genomics, enabling efficient mining of complex regulatory patterns from high-dimensional genomic data \cite{koo2020deep}. In 3D genomics, deep learning models can predict spatial features of chromatin, including dynamic changes of topologically associating domains (TADs) \cite{zhou2022sequence}, chromatin compartments \cite{zhou2022sequence}, and chromatin loop \cite{lin2025unveiling}, as shown in Figure \ref{fig:1}(a), decoding genome function from sequence through structure \cite{fudenberg2020predicting}. These models can also decipher cell-type-specific chromosomal spatial organization patterns \cite{schuette2025chromogen}, quantify spatial interaction strengths between genomic loci, and uncover their associations with gene expression regulation.  In regulatory genomics, deep learning facilitates the integration of multi-omics data (e.g., Hi-C \cite{lieberman2009comprehensive}, ATAC-seq \cite{klemm2019chromatin}, HiChIP \cite{mumbach2016hichip}) to predict regulatory variants of DNA methylation \cite{zhou2025deep}, predict potential motif-motif interactions \cite{avsec2021effective, fu2025foundation}, efficiently identify cis-regulatory elements \cite{zhang2025trapt} and infer cooperative transcription factor binding networks \cite{fu2025foundation}, as presented in Figure \ref{fig:1}(b, c).

\begin{figure}[h]
\centering
\includegraphics[width=1.07\linewidth]{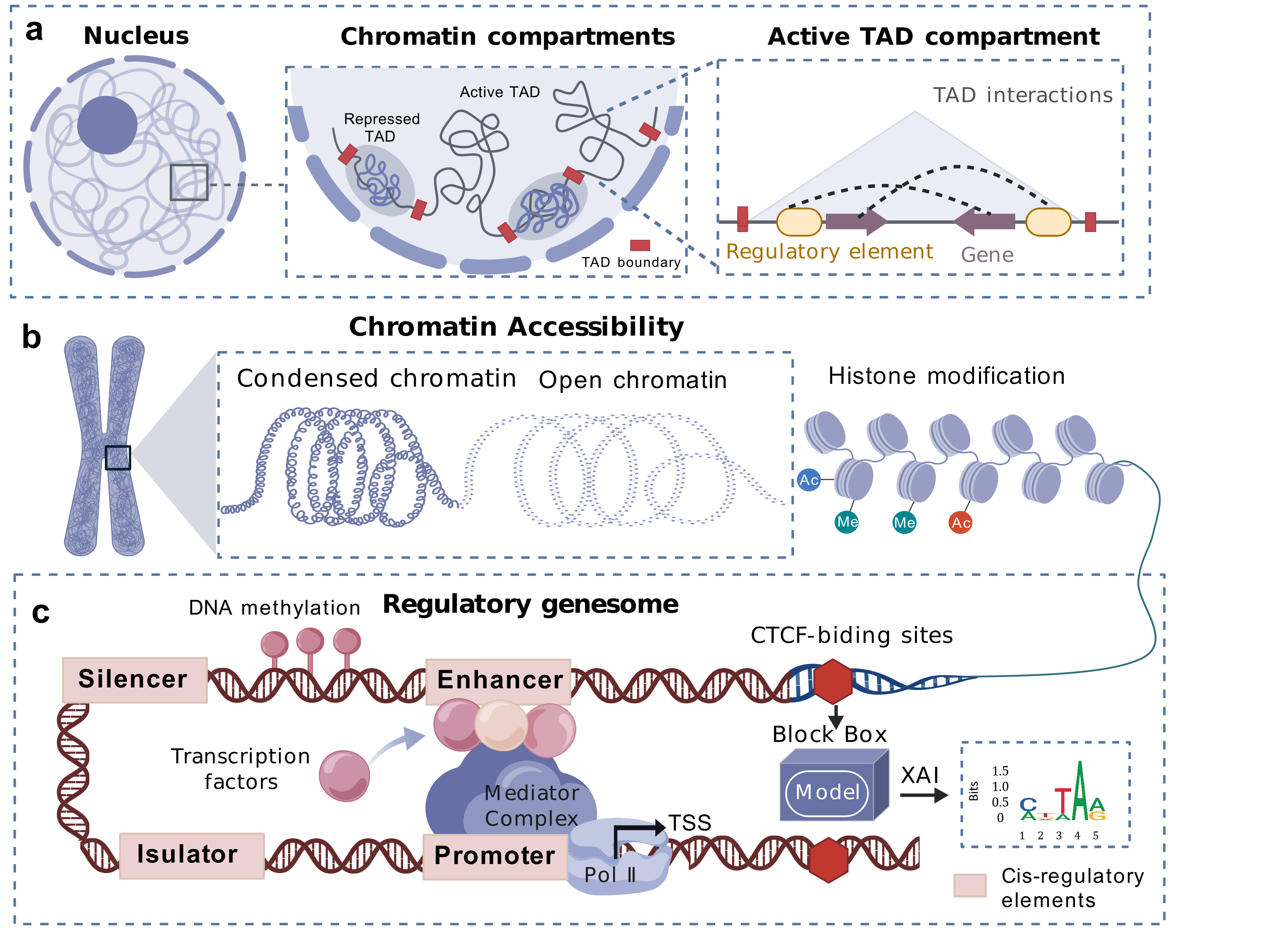}
\caption{\textbf{Applications of deep learning in 3D genomes and re- gulatory genomes.} \textbf{a.} Deep learning models can predict spatial structural features of  chromatin, such as chromatin compartments and topologically associating domains (TADs). \textbf{b.} Deep learning is also effective in predicting chromosome accessibility and histone modifications (e.g., H3K27ac). \textbf{c.} Additional applications include identifying cis-regulatory elements (e.g., enhancers/silencers), pre- dicting transcription factor binding motifs, and inferring precise transcription factor binding sites.} 
\label{fig:1}
\end{figure}

Although deep learning methods have achieved remarkable success in these biological fields, due to the use of multi-layer nonlinear transformations, complex architectures, and a large number of parameters, deep learning models are often perceived as `` black boxes'' by humans, making their internal mechanisms difficult to interpret and understand. The inherent `` black-box '' nature of these models not only limits their reliability and practical applicability,  but also makes it hard to understand how their predictions relate to actual biological processes, which hinders the translation of research findings into biological mechanisms \cite{azodi2020opening}.  

In recent years, the development of eXplainable AI (XAI) methods has emerged as a promising approach to address the `` black-box'' challenge \cite{molnar2020interpretable}. These methods aim to reveal the inner workings of neural network models in ways that are understandable to humans, or to explain how specific input features influence the prediction of model, thereby enhancing model transparency and trustworthiness. Therefore, many recent studies have employed interpretable deep learning methods to gain biologically relevant insights, such as identifying cis-regulatory elements that influence gene expression \cite{tan2023cell}. In this work, we systematically review and categorize the current interpretability techniques accordingly applied to two major tasks: 3D genome modeling and gene regulatory network prediction.

Despite various interpretable deep learning methods have shown promising results, each comes with its own limitations, and there remains a lack of rigorous mathematical proofs regarding the specific limitations of their applicability. In particular, many critical claims in existing studies (e.g.,`` enforcing transparency techniques may compromise model performance'' \cite{novakovsky2023obtaining}) are primarily based on empirical observations and lack solid theoretical foundations. To address this gap, we provide theoretical analyses of the limitations of certain interpretability methods.

The contributions of this paper are as follows:  

• We propose a novel classification framework that categorizes interpretable deep learning methods in genomics into two types: (1) Input interpretability, encompassing convolutional kernel visualization, gradient-based methods, and perturbation-based methods. (2) Model interpretability, including attention mechanisms and transparent models based on biological prior knowledge.

• We not only provide intuitive explanations of each method and introduce their applications but also rigorously reveal the inherent limitations of selected interpretable approaches through mathematical derivation, thus offering theoretical support for method evaluation and selection.


\section{Interpretable methods}

In this section, we systematically introduce interpretable approaches in deep learning applications for genomics, categorizing them into two main types: input interpretability and model interpretability. Input interpretability methods aim to explore key features learned from the input or to identify input features that have a significant impact on the predictions, employing techniques such as convolutional kernel visualization to identify learned sequence motifs, gradient-based methods to quantify feature importance, and perturbation-based analyses to assess the functional impact of input modifications. In contrast, model interpretability approaches, enhance transparency through architectural designs, including attention mechanisms that reveal the relationships between features and biologically inspired transparent models that explicitly align network components with known biological entities. The methods are systematically summarized in Table \ref{table}.  
The introduction of each method is as follows.

\subsection{Input Interpretability}
Input interpretability aims to reveal key patterns within the input features or to assess the importance of each feature for the prediction of the model. 
Unlike traditional k-mer-based feature engineering methods that directly reveal feature importance, deep neural networks employ multi-layer nonlinear transformations for automated feature extraction, resulting in opaque decision processes. To address this, there exist many input interpretability methods, which we further categorize into three classes: convolutional kernel visualization, gradient-based methods, and perturbation methods. Figure \ref{fig:2} illustrates the underlying principles of each method.

\textbf{Convolutional kernel visualization.}
With the advancement of deep learning technology, convolutional neural networks (CNNs) have been widely applied in genomic research, such as enhancer prediction \cite{min2017predicting}, transcription factor binding site prediction \cite{zheng2021deep}, and chromatin accessibility prediction \cite{kelley2016basset}. CNNs can automatically extract features from DNA sequences.  Specifically, CNNs slide multiple filters across the DNA sequence in the convolutional layers, obtaining activation values for each position relative to each filter. These activation values are then used as features and passed to fully connected layers for prediction. As shown in Figure \ref{fig:2}(a), the convolutional kernel weights in the first layer are first converted to a position frequency matrix (PFM), followed by log-scaling to produce a standard position weight matrix (PWM) \cite{alipanahi2015predicting, novakovsky2023obtaining}, which captures a short sequence motif. Therefore, visualizing these filters can help us understand which sequence patterns the model has extracted that are crucial for prediction. A potential issue here is the constraints in weight learning: unconstrained or inappropriate constraints during weight learning may lead to scaling problems, causing the importance of certain sequences to be disproportionately emphasized or underestimated. In most studies, a commonly used visualization strategy involves computing the activation values of multiple sequences for each trained filter, indicating the degree of match between each sequence and the filter. The sequences with the highest activation values are then selected to generate a corresponding PWM, representing the motif learned by the filter. Then we can obtain motifs learned by multiple filters and match them with existing motif databases to verify whether the model has captured biologically relevant and important features.
For example, DeepEnhancer \cite{min2017predicting} uses a CNN to predict enhancers from DNA sequences. Its first convolutional layer applies 128 filters of length 8 to extract features. By visualizing these filters, the model identifies motifs and matches them with the known transcription factor binding site database JASPAR \cite{mathelier2016jaspar}, successfully recognizing biologically relevant motifs. Furthermore, Basset \cite{kelley2016basset} recovered a large number of known DNA-binding protein motifs by analyzing the parameters of the 300 convolutional filters in the convolutional layer.

\begin{figure}[htbp]
\centering
\includegraphics[width=1.20\linewidth]{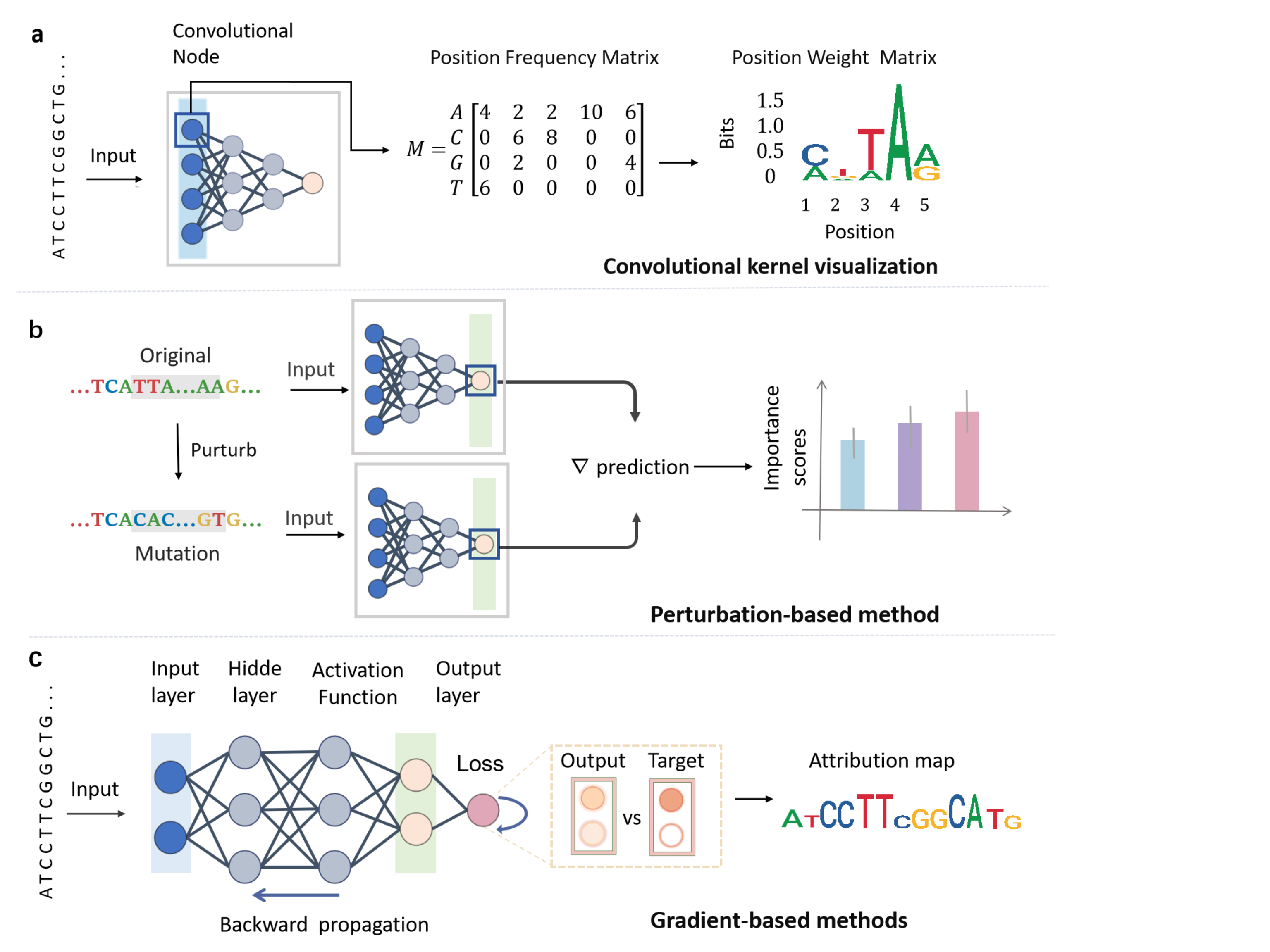}
\caption{Input Interpretability. \textbf{a.} Convolutional kernel visualiza- tion.  A single convolutional filter acts as a position weight matrix to scan the input sequence to quantify the binding bias of each position nucleotide in the DNA sequence. \textbf{b.} Perturbation-based methods. Systematic mutagenesis of individual nucleotides or sequence seg- ments enables quantitative evaluation of node-specific importance for model predictions. \textbf{c.} The gradient of the loss function computed by backpropagation highlights the contribution of each input feature.}
\label{fig:2}  
\end{figure}


\textbf{Gradient-based methods.}
Gradient-based importance analysis methods achieve global quantitative evaluation of feature importance by computing the partial derivatives of model outputs relative to input features. As shown in Figure \ref{fig:2}(c), this approach utilizes backpropagation to compute the gradients of the loss function with respect to sequence features, where gradient magnitudes directly measure the contribution of individual nucleotide in the DNA sequence. 
For instance, in genomic analyses, positive/negative gradient peaks correspond to enhancer and silencer regions, respectively \cite{kelley2018sequential}. 
However, gradient-based methods are susceptible to vanishing gradients or gradient saturation. In deep neural networks or when using saturating activation functions, excessively small gradients may introduce bias in feature importance estimation. This limitation can be mitigated by integrated gradients, which correct importance scores through path integration \cite{sundararajan2016gradients}. Moreover, an alternative approach, DeepLIFT \cite{shrikumar2017learning}, provides local explanations by comparing the prediction differences between test instances and reference sequences (e.g., background nucleotide frequencies), thereby avoiding the saturation effects of gradient methods. DeepLIFT has been employed to identify critical nucleotides in splice sites \cite{zuallaert2018splicerover}. The application of DeepLIFT algorithm successfully identified expression-predictive motifs (EPMs) in both 5'UTR and 3'UTR regions, which exhibit distinct sequence-specific positional preferences \cite{peleke2024deep}. Notably, the attribution rules of DeePLIFT match Shapley values, providing a robust feature importance framework \cite{lundberg2017unified}. Both integrated gradients and DeepLIFT require a predefined reference baseline, and the choice of this baseline can significantly influence the accuracy of attribution results. However, there is currently no consensus or established guideline on how to select an appropriate reference.

\begin{table*}[]
\caption{Interpretable Methods and Models Available for Deep Learning}
\label{table}
\begin{threeparttable}
\begin{tabular}{lllll}
\hline
\toprule
Tool Name and URL  & Year & Strategy & Model & Application \\
\midrule
\begin{tabular}[c]{@{}l@{}}Puffin  \cite{dudnyk2024sequence} \textit{(Science)} \\https://github.com/jzhoulab/puffin\end{tabular} & 2024 & 
Convolutional kernel visualization & U-net & Regulatory Genomics \\

\begin{tabular}[c]{@{}l@{}}INTERACT  \cite{zhou2025deep} \textit{(Science Advances)}\\https://zenodo.org/records/10955827\end{tabular} & 2025 & 
\begin{tabular}[c]{@{}l@{}}Convolutional kernel visualization\\ Attention Mechanism\end{tabular} & 
\begin{tabular}[c]{@{}l@{}}CNN, Transformer\end{tabular} & Regulatory Genomics \\

\begin{tabular}[c] {@{}l@{}}Basset \cite{kelley2016basset} \textit{(Genome research)}\\ http://www.github.com/davek44/Basset\end{tabular} & 2016 & 
Convolutional kernel visualization & CNN &Regulatory Genomics \\

\begin{tabular}[c]{@{}l@{}}EpiVerse \cite{lin2025unveiling} \textit{(Nature Communications)}\\https://github.com/jhhung/EpiVerse\end{tabular} & 2025 & 
\begin{tabular}[c]{@{}l@{}}Convolutional kernel visualization\\Perturbation-based methods\\ Gradient-based methods\end{tabular} & 
CNN, Transformer & 3D Genomics \\

\begin{tabular}[c]{@{}l@{}}Orca \cite{zhou2022sequence} \textit{(Nature genetics)}\\https://github.com/jzhoulab/orca\end{tabular} & 2022 & 
Perturbation-based methods & CNN & 3D Genomics \\

\begin{tabular}[c]{@{}l@{}}C.origama \cite{tan2023cell} \textit{(Nature biotechnology)}\\https://github.com/tanjimin/C.Origami\end{tabular} & 2023 & \begin{tabular}[c]{@{}l@{}}Perturbation-based methods\\Gradient-based methods\\Attention mechanism\end{tabular} & 
\begin{tabular}[c]{@{}l@{}}CNN, Transformer\end{tabular} & 3D Genomics \\

\begin{tabular}[c]{@{}l@{}}CLMLA \cite{dibaeinia2025interpretable} \textit{(Science Advances)}\\https://github.com/PayamDiba/CIMLA\end{tabular} & 2025 &Perturbation-based methods & Random Forest & Regulatory Genomics \\

\begin{tabular}[c]{@{}l@{}}Deepliver  \cite{bravo2024single} \textit{(Nature Cell Biology)}\\https://zenodo.org/record/8139953\end{tabular} & 2024 &\begin{tabular}[c]{@{}l@{}}Perturbation-based methods\\Gradient-based methods\end{tabular} & CNN & Regulatory Genomics \\

\begin{tabular}[c]{@{}l@{}}Geneformer \cite{theodoris2023transfer} \textit{(Nature)}\\https://huggingface.co/ctheodoris/Geneformer\end{tabular} & 2023 & 
\begin{tabular}[c]{@{}l@{}}Perturbation-based methods\\ Attention mechanism\end{tabular} & 
Transformer & Regulatory Genomics \\

\begin{tabular}[c]{@{}l@{}}DeepCRE \cite{peleke2024deep} \textit{(Nature Communications)}\\https://github.com/NAMlab/DeepCRE\end{tabular} & 2024 & \begin{tabular}[c]{@{}l@{}}Gradient-based methods\end{tabular} & 
\begin{tabular}[c]{@{}l@{}}CNN\end{tabular} &Regulatory Genomics \\

\begin{tabular}[c]{@{}l@{}}GET \cite{fu2025foundation} \textit{(Nature)}\\https://github.com/GET-Foundation\end{tabular} & 2025 & 
Gradient-based methods & Transformer & Regulatory Genomics \\

\begin{tabular}[c]{@{}l@{}}Enformer \cite{avsec2021effective} \textit{(Nature methods)}\\https://tfhub.dev/deepmind/enformer/1\end{tabular} & 2021 & 
\begin{tabular}[c]{@{}l@{}}Attention mechanism\\ Perturbation-based methods\\ Gradient-based methods\end{tabular} & 
Transformer & Regulatory Genomics \\

\begin{tabular}[c]{@{}l@{}}DCell \cite{ma2018using} \textit{(Nature methods)}\\http://d-cell.ucsd.edu/\end{tabular} & 2018 & 
\begin{tabular}[c]{@{}l@{}}Transparent models\\ Gradient-based methods\\ Perturbation-based methods\end{tabular} & 
Neural Network & 3D Genomics \\

\begin{tabular}[c]{@{}l@{}}Getnet \cite{van2021gennet} (\textit{Communications Biology)}\\https://github.com/arnovanhilten/GenNet/\end{tabular} & 2021 & 
Transparent models & Neural Network & Regulatory Genomics \\
\bottomrule
\end{tabular}
\end{threeparttable}

\end{table*}

\textbf{Perturbation-based methods.}
Perturbation-based methods  infer the importance of features by manipulating the input features and observing the changes in the output of the model. 
Perturbation-based interpretability methods were first introduced in the field of computer vision, where specific regions of an image are masked to observe changes in predictions, thereby assessing the importance of those regions \cite{zeiler2014visualizing}. This intuitive approach aligns well with human reasoning and has since been widely adopted in other domains.
Similarly to image perturbation, one can consider altering certain specific segments in biological gene sequences to determine the importance of the features corresponding to these segments. For example, we can perturb parts of the sequence and observe changes in the output, as seen in Figure \ref{fig:2}(b). The discrepancy between the predictions for these new sequences and the original sequence is quantified as the attribution score, serving as a critical metric for evaluating the functional significance of individual nucleotides in the context of the trained model. It is also feasible to mutate a specific nucleotide within the sequence into each of the other three nucleotides and observe the resultant changes in the output. By repeating this operation for all nucleotides, a matrix of dimensions $4\times L$ can be obtained, which is commonly referred to as the attribution map \cite{simonyan2013deep}. iMAP \cite{liu2022large} is used to knock out genes for the discovery of therapeutic targets for diseases. The perturbation model based on $\beta$-VAES \cite{bjerregaard2025interpretable} conducts perturbation analysis on single-cell RNA datasets to predict gene expression changes during gene knockout, toxin response, and embryonic development. 
 A perturbation map \cite{dhainaut2022spatial} was developed to provide a scalable approach for evaluating how specific genetic alterations impact the local, proximal, and distal tumor microenvironment (TME) states. However, many neural networks are trained in a way that resists Dropout \cite{hinton2012improving}. As a result, multiple neurons may be associated with the importance of the same feature, which can lead to the failure of perturbation methods. 

For input interpretability, while weight matrices corresponding to convolutional kernels can reflect the importance of sequences, the lack of proper constraints during their learning process may distort the perceived importance of these sequences. Gradient-based methods can evaluate feature importance by computing the partial derivatives of model output with respect to input features, while they may also face the potential issue of gradient vanishing. Perturbation-based methods assess the importance of altered features by introducing minor perturbations to the input and observing the resulting changes in model outputs, yet they may lead to an underestimation of feature importance due to neuronal redundancy.

\subsection{Model interpretability}

\begin{figure}[htbp]
\centering
\includegraphics[width=0.9\linewidth]{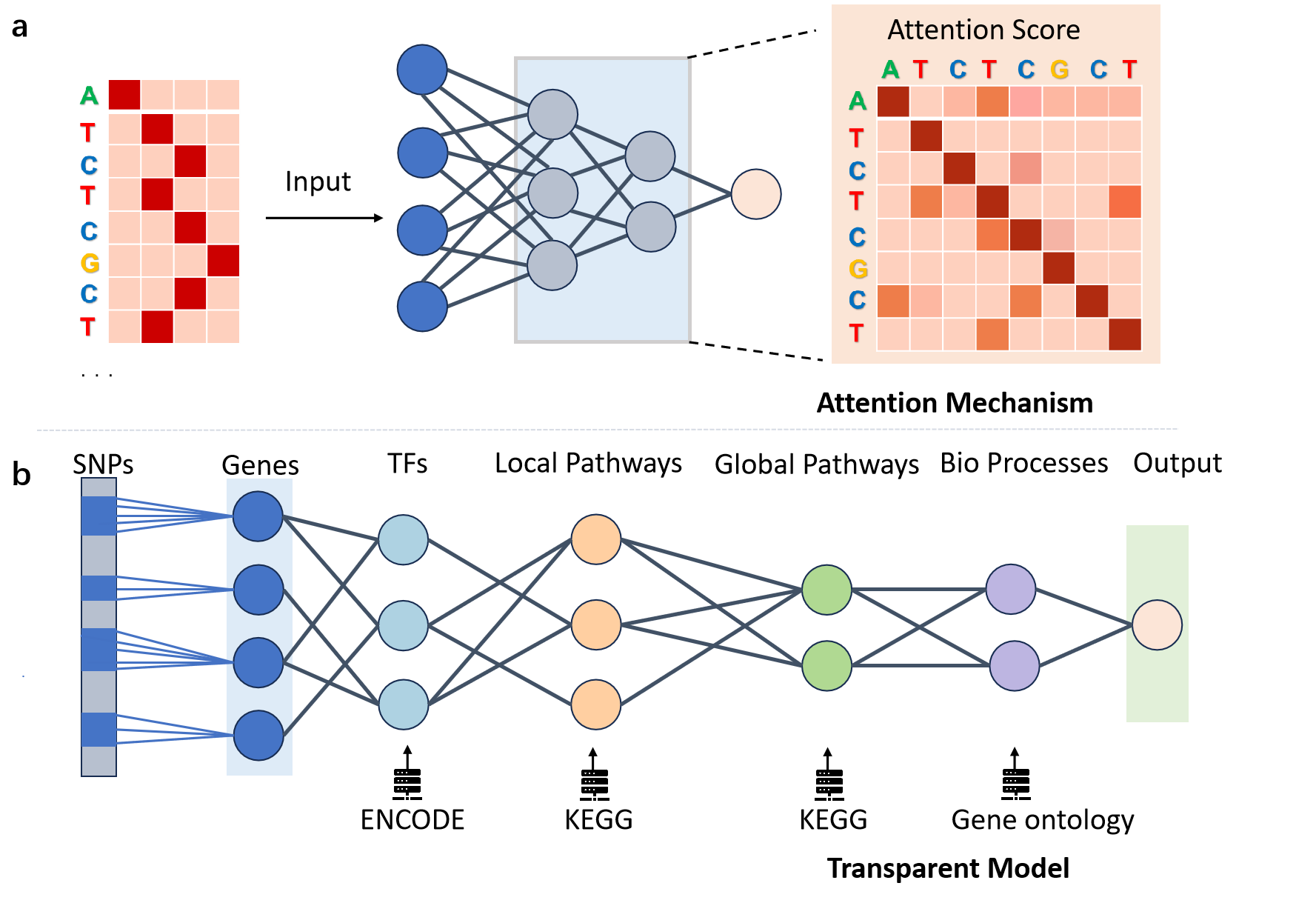}
\caption{\textbf{Model Interpretability.} \textbf{a.} Attention Mechanism. Quantify the interactions between different nucleotide positions in the sequence, thereby revealing combinatorial effects on model predictions or biological functions. \textbf{b.} Transparent models. The models are constructed by biological prior knowledge. }
\vspace{8pt}
\label{fig:Attn}
\end{figure}

Model interpretability aims to facilitate human understanding of internal mechanisms and decision-making processes through architectural design.
For example, traditional linear models are inherently interpretable, as we can directly understand the importance of each feature through its assigned weight. In contrast, deep neural networks typically consist of multiple hidden layers and numerous parameters, making it difficult to comprehend their decision-making processes solely based on their architecture or parameters. Therefore, many researchers are currently exploring model-based interpretability techniques to enhance the credibility of the model and uncover new biological insights. These methods can be mainly categorized into two types: attention mechanisms and transparent models based on biological prior knowledge. The principles of the various methods are illustrated in Figure \ref{fig:Attn}.

\textbf{Attention mechanisms.}
The attention mechanism is a widely adopted technique in neural networks. A representative example is the Transformer model \cite{vaswani2017attention}, which leverages self-attention to capture dependencies between tokens within a sequence. The attention weights inherently provide a form of interpretability \cite{naim2024explaining}, offering insight into which input tokens the model focuses on when making predictions and revealing interactions between them. The visualization results are shown in Figure \ref{fig:Attn}(a). EpiBERT \cite{javed2025multi} introduces a masking technique for genomic loci and enhancer regions, extracts query and key matrices from each attention layer, averages the resulting weights across all layers and attention heads, and visualizes these weights to indicate the degree of correlation between different positions. Enformer \cite{avsec2021effective} extracts the query row at the transcription start site (TSS), where the keys denote different spatial positions and the attention values reflect the focus of the model on these positions during the prediction of TSS. In addition, C.Origami \cite{tan2023cell} exploits attention scores from Transformer models to identify cis-regulatory elements that play critical roles in 3D chromatin architecture.
However, in DNA sequences, sequences at multiple positions may be associated with the same biological feature, which is mathematically termed multicollinearity, and this characteristic may lead to instability in the weights of attention mechanism matrices.

\textbf{Transparent models based on biological prior knowledge.}
Unlike traditional neural networks, where neurons in the hidden layer typically lack clear biological meaning, transparent neural network models are explicitly designed so that their hidden nodes correspond directly to specific biological units at a fine-grained level, thereby significantly enhancing interpretability \cite{ma2018using,elmarakeby2021biologically,van2021gennet}. To construct models with intrinsically interpretable units, it is necessary to incorporate prior knowledge into the network architecture design. For instance, by mapping bottom-layer nodes to molecular entities (e.g., genes or proteins) and deeper-layer nodes to functional modules (e.g., metabolic pathways or organelles), a hierarchical biological representation system can be established, as illustrated in Figure \ref{fig:Attn}(b). Deep neural networks build higher-order biological representations through progressive integration of lower-level features, for example, the second layer may capture cooperative interactions between transcription factor binding motifs, while deeper layers can map to complex systems such as complete biological pathways \cite{novakovsky2023obtaining}. As a breakthrough in biological applications, DCell \cite{ma2018using} pioneered a modeling approach that combines both neural network computational power and model transparency. The input layer of DCell directly corresponds to gene nodes, while its second layer constructs functional group nodes based on the hierarchical structure of Gene Ontology, establishing an association model between genotype and yeast growth rate. The GenNet \cite{van2021gennet} model utilizes genetic variants (SNPs) as input layer nodes and maps them to gene nodes in the second layer through NCBI RefSeq gene annotations, thereby constructing predictive association models between genotypes and complex phenotypes (e.g., schizophrenia, hair color). Through transparent variable construction based on multi-omics data, P-NET \cite{elmarakeby2021biologically} has identified MDM4 as a biomarker in metastatic prostate cancer. Although transparent models offer significant interpretability advantages, their construction relies on domain-specific prior knowledge, which to some extent limits their applicability. Furthermore, the technical approaches used to ensure model transparency may adversely affect predictive performance.  

For model interpretability, the attention mechanism reveals the importance of input tokens and reflects the correlations between different tokens. Multicollinearity among input tokens can lead to numerical instability in attention weights. Transparent neural networks can fully incorporate biologically relevant prior knowledge to reveal specific biological functional relationships, yet they may lack universality and cause models to overlook not clearly defined relationships.  

\section{Theoretical analysis}
In this section, we present a systematic theoretical analysis of the technical limitations inherent in each methodological approach.
\subsection{The limitation of input interpretability}
Analysis reveals three fundamental limitations in input interpretability approaches for deep neural networks: First, unconstrained weight optimization in convolutional layers provably causes scaling instability, mathematically demonstrating disproportionate feature amplification. Second, the inherent neuronal redundancy from dropout mechanisms obscures node-level importance attribution. Third, the null gradient of ReLU for negative inputs, combined with the multiplicative accumulation of small weights through successive layers, results in progressive gradient decay during backpropagation.

\textbf{The unconstrained learning of weights}. 
 CNNs for DNA sequence prediction extract local features by sliding multiple convolutional kernels across the input sequence. The learned kernels can ultimately be interpreted as PWM, representing the key motifs identified by the model from the input sequences. 
However, it should be noted that the learning of weights should be regulated by constraints. Otherwise, it is highly likely to lead to scaling issues. 

Suppose we study a classification problem where a DNA sequence is input to determine whether each position is related to a certain structure. The input sequences are typically represented as a $L\times 4$ matrix, where  $s_{ij}$ represents the entry at position (i, j) and  the convolutional kernel is a matrix of size $k\times 4$. To facilitate the analysis of the problem, the network structure is simplified. Specifically, the fully connected layers are abstracted into a single layer, and the binary cross-entropy loss function is adopted as the loss function. The sigmoid function $L=\sigma(z)$ is used as the activation function for the last layer of the neural network. The weights of the convolutional kernel are updated through  gradient descent. Then in each learning iteration, the weights will be updated in the following manner:
\begin{equation} \footnotesize
    w_{ij}\xleftarrow{}w_{ij}-\eta \frac{\partial L}{\partial w_{ij}}
\end{equation}

To further analyze the scaling problem of the weights, the partial derivatives will be expanded in accordance with the chain rule. Taking $w_{11}$, which is located at the first row and the first column of the convolutional kernel as an example:

\begin{equation} \footnotesize
    \frac{\partial L}{\partial w_{11}}=\frac{\partial L}{\partial z} \frac{\partial z}{\partial y}\frac{\partial y}{\partial w_{11}}
\end{equation}

{\footnotesize
\begin{align}  
    \frac{\partial L}{\partial z} &= -y \frac{1}{\sigma(z)}\sigma'(z)-(1-y)\frac{1}{1-\sigma(z)}(-\sigma'(z)) \\ \notag
    &= -y(1-\sigma(z))+(1-y)\sigma(z) \\ \notag
    &= \sigma(z)-y
\end{align}}

The sequence feature values $y_1,y_2,...,y_L$ are obtained by convolutional operation (using $y_1$ as an example):

\begin{align*} 
    y_{1}
    &=w_{11}s_{11}+w_{12}s_{12}+w_{13}s_{13}+w_{14}s_{14}\\ \notag
    &+w_{21}s_{21}+w_{22}s_{22}+w_{23}s_{23}+w_{24}s_{24}\\ \notag
    &+...\\ \notag
    &+w_{k1}s_{k1}+w_{k2}s_{k2}+w_{k3}s_{k3}+w_{k4}s_{k4} \\ \notag
\end{align*}

The output $z$ is obtained by multiplying the fully-connected weight matrix with the feature sequence.

\begin{equation}  \footnotesize
    z=\alpha_1 y_1+\alpha_2 y_2+...+\alpha_L y_L
\end{equation}

Furthermore, the update formula for the weights of the convolutional kernel can be specifically written.

\begin{equation}  \footnotesize
    \frac{\partial L}{\partial w_ {11}}=(\sigma(z)-y)(\alpha_1 s_{11}+\alpha_2 s_{21}+...+\alpha_L s_{L1})
\end{equation}


Among them, $s_{ij}$ $\in$ $\{0,1 \}$. If the true label is 1, then $\sigma(z)-y<0$. If the subsequent part is greater than 0, it will cause the weight to increase. At this time, if there is no constraint, there will be a risk that the weight will continue to increase. Eventually, a certain weight will become too large. 

Then it is common practice to transform the PWM into a PFM\cite{novakovsky2023obtaining} through the softmax operation, and then obtain the standard PWM matrix via a logarithmic transformation.

The softmax operation on the convolutional kernel is shown in the following equation (still taking $w_{11}$ as an example):

\begin{equation}
     \footnotesize
w'_{11}=\frac{e^{w_{11}}}{e^{w_{11}}+e^{w_{12}}+e^{w_{13}}+e^{w_{14}}}
\end{equation}

It can be seen that if $w_{11}$ is too large while the other weight values are relatively small, it leads to $w'_{11} \approx \frac{e^{w_{11}}}{e^{w_{11}}+0+0+0}=1$.

And if $w_{11}$ is small and the other weight is relatively large, it will lead to $w'_{11} \approx \frac{0}{0+e^{w_{12}}+e^{w_{13}}+e^{w_{14}}}=0$.

This will lead to the irrationality of the weights. It will cause the functions of certain bases in the sequence to be overemphasized, while the significance of some other bases is neglected. 

For the purpose of ensuring the positive and negative nature of the values, when performing the logarithmic transformation, it is usually carried out by $w_{ij_{log}}=log(w'_{ij}+\epsilon)$, which $\epsilon$ is always determined according to the condition to satisfy the stability of the values. Due to $\epsilon $ being always small, when $w'_{ij}$ approaches 0, the value after logarithmic transformation will tend to negative infinity. When $w'_{ij}$ approaches 1, the value after the logarithmic transformation will tend to 0. Both scenarios will lead to the distortion of $w_{ijlog}$. 

This may lead to the effect of the bases in certain parts of the input DNA sequence being overly amplified, while the roles of the bases at other positions are ignored. This process theoretically analyzes why unconstrained learning can lead to scaling issues, thus rendering the interpretation of the DNA sequence lacking in reliability.

\textbf{The redundancy of neurons}.
Perturbation-based methods can evaluate the contribution of individual neurons by selectively removing them and observing the resulting changes in model predictions. However, deep neural networks are usually trained in a way that resists Dropout. 
The mechanism of Dropout is to deactivate some neurons during each training session. This ensures that certain specific features in the input DNA sequence do not rely on a fixed neuron, but can be captured by multiple neurons. 
More specifically, it is $r_{j}^{l} \sim Bernouli(p)$, which means that the $j$-th neuron in the $l$-th layer has the probability of $p$ being retained during training process. 


Suppose that there are two different neurons, $N_{i}$ and $N_{j}$, in the first convolutional layer, $y_{i}$ and $y_{j}$ are the final contributions of neuron $i$ and neuron $j$ in the entire neural network. 
\begin{align*} 
y_{i}&=f(\sum_{k}W_{ik}X_{k}+b_{i})\\ \notag
    y_{j}&=f(\sum_{k}W_{jk}X_{k}+b_{j}) \notag
\end{align*}

In standard CNNs, adjacent layer neurons are tightly coupled through weight updates, enabling collaborative feature learning. Dropout randomly disconnects the interneuron pathways during training, forcing independent feature acquisition. This causes redundant node formation: multiple subnetworks repeatedly learn identical DNA sequence features, meaning minimal output changes when removing specific neurons do not necessarily indicate low predictive importance of these nodes. 


Since a neural network contains numerous neurons, as well as parameters such as weights and biases, it can be regarded as a high-dimensional manifold $\mathcal{M}$. The number of neurons in each neural network as a whole, the initial weights and other factors together constitute an abstract parameter $\theta$. From this perspective, the neural network training process is to select a minimum point of the loss function $L(\theta)$ on $\mathcal{M}$. Suppose that the overall parameter settings under the initial conditions are $\theta$. At this time, for any input $x$, the output of the model can be expressed as $f_{\theta}(x)$. $N_i$ and $N_j$ can be regarded as two distinct minor parameters within the parameter space $\theta$. Removing such a neuron is equivalent to perturbing in a certain parameter direction. If these two neurons are redundant with respect to each other, that is, there exist parameter directions $v_i,v_j \in \mathbb{R^d}$ such that for any input $x$:

\begin{equation}  \footnotesize
    f_{\theta+\epsilon v_i}(x)\approx f_{\theta+\epsilon v_j}(x)+O(\epsilon ^2)
\end{equation}

This equation indicates that the output results after small perturbations in two directions are approximately equal, where $O(\epsilon^2)$ denotes an extremely small error related to $\epsilon$. 

And these two parameter directions can be regarded as symmetric, and the corresponding covariant derivatives commute. 

\begin{equation}  \footnotesize
    \nabla v_i \nabla v_j f_\theta(x)=\nabla v_j \nabla v_i f_\theta (x)
\end{equation}

Then the components of the Riemann curvature tensor $\mathcal{R}$ of the parameter space in the directions of $v_1$ and $v_2$ are:

\begin{equation}  \footnotesize
     R(v_i,v_j,v_i,v_j)=<\nabla v_i\nabla v_jf_\theta - \nabla v_j \nabla v_i f_\theta ,f_\theta >\approx0
\end{equation}

It represents that the manifold is locally flat in the two parameter directions represented by the removal of redundant nodes. Assume that $N_i$ is removed. Then it can be regarded as a perturbation in the direction of $v_i$ in the parameter space, with $\theta'=\theta-\alpha v_i$. Subsequently, the variations in the output can be calculated: 
{\footnotesize
\begin{align*} 
    \Delta f&=f_{\theta}(x)-f_{\theta'}(x)\\ \notag
    &=f_{\theta}(x)-f_{\theta-\alpha v_1}(x)\\ \notag
    &=\nabla_{\theta}f_{\theta}(x)\cdot \alpha v_1-\frac{1}{2}(-\alpha v_1)^{T}\nabla_{\theta}^{2}f_{\theta}(x)(-\alpha v_1)+O(\alpha^3) \notag
\end{align*}}

Due to the large number and complexity of parameters in the entire neural network, the removal of a single neuron can be regarded as a minor change in a certain direction, and $\alpha$ is an infinitesimal quantity. At the same time, since the manifold $\mathbb{R}$ is locally flat in the subspace spanned by $v_1$ and $v_2$, $\nabla v_1 f_{\theta}(x)$ and $\nabla v_1^2f_{\theta}(x)$ approach 0. 

In summary, we obtain that $\Delta f$ approaches 0. That is, when redundancy occurs in neurons due to the Dropout mechanism, the output of the model does not change significantly after removing a single neuron, which makes it hard to access the importance of the node.

\textbf{Vanishing gradient in DNA sequences.}
 Vanishing gradient is a common issue in gradient-based model interpretation methods, 
 and is closely tied to the choice of activation function. For instance, the ReLU activation function outputs zero for negative inputs while propagating positive inputs unchanged. However, its constant unity gradient for positive inputs may still lead to vanishing gradients during backpropagation in deep networks. Consider a single-layer neural network with the output given by:

\begin{equation}  \footnotesize
Y = \text{ReLU}(W \cdot X + b)\end{equation}

Here, \( W \) represents the weights matrix,  \( X \) is the input, and \( b \) is the bias term. The normal input and output relationship can be expressed as \( Y = \text{ReLU}(W \cdot X + b) = W \cdot X + b \). The gradient of the output \( Y \) with respect to the input \( X \) can be calculated as:

\begin{equation}  \footnotesize
\frac{\partial Y}{\partial X} = \frac{\partial \text{ReLU}(W \cdot X + b)}{\partial X} = \text{ReLU}'(W \cdot X + b) \cdot W
\end{equation}

According to the derivative definition of ReLU, if \( W \cdot X + b > 0 \), then \( \text{ReLU}'(W \cdot X + b) = 1 \), and the gradient is \( W \). Conversely, if \( W \cdot X + b \leq 0 \), then \( \text{ReLU}'(W \cdot X + b) = 0 \), and the gradient is 0. Extending this to a multi-layer neural network, assume output of each layer passes through a ReLU activation function, with the output of $l$-th layer given by:

\begin{equation}  \footnotesize
Z^{(l)} = \text{ReLU}(W^{(l)} \cdot Z^{(l-1)} + b^{(l)})
\end{equation}

Here, \( Z^{(0)} = X \) is the input, and \( Z^{(L)} = Y \) is the output. To calculate the gradient of the output \( Y \) with respect to the input \( X \), the chain rule must be applied:

\begin{equation}  \footnotesize
\frac{\partial Y}{\partial X} = \frac{\partial Z^{(L)}}{\partial Z^{(L-1)}} \cdot \frac{\partial Z^{(L-1)}}{\partial Z^{(L-2)}} \cdots \frac{\partial Z^{(1)}}{\partial X}
\end{equation}

The gradient of each layer is:

\begin{equation}  \footnotesize
\frac{\partial Z^{(l)}}{\partial Z^{(l-1)}} = \text{ReLU}'(W^{(l)} \cdot Z^{(l-1)} + b^{(l)}) \cdot W^{(l)}
\end{equation}

If the input to a layer \( W^{(l)} \cdot Z^{(l-1)} + b^{(l)} \leq 0 \), then \( \text{ReLU}'(W^{(l)} \cdot Z^{(l-1)} + b^{(l)}) = 0 \), and thus, the gradient for that layer is 0. This means that the gradient at this layer vanishes and cannot continue to propagate backward. Additionally, the gradient is also influenced by the weight matrices, as shown by:
{\footnotesize
\[  
\frac{\partial Y}{\partial X} = \frac{\partial Z^{(L)}}{\partial Z^{(L-1)}} \cdot \frac{\partial Z^{(L-1)}}{\partial Z^{(L-2)}} \cdots \frac{\partial Z^{(1)}}{\partial X} = W^{(L)} \cdot W^{(L-1)} \cdots W^{(1)}
\]}

When \( W^{(1)}, W^{(L-1)}, \ldots, W^{(L)} \) are very small, the value of the gradient will also be very small, leading to vanishing gradient.






\subsection{The limitation of model interpretability}
This section systematically demonstrates the inherent limitations of model interpretability approaches in DNA sequence analysis. First, attention mechanisms intrinsically exhibit numerical instability, which can be mathematically proven to induce matrix ill-conditioning under multicollinearity. Second, biologically-constrained transparent models may yield higher loss values compared to their unconstrained counterparts. 


\textbf{Instability of in the estimation of the weights.}
The computation of attention weights typically involves two vectors, namely the query vector and the key vector. 
The attention score is calculated by taking the dot product of the query matrix ${Q} =[q_1,q_2,...,q_n]$ and the key matrix ${K}=[k_1,k_2,...,k_n]$:
\begin{equation}  \footnotesize
    S=QK^{T}
\end{equation}

Since the input features often exhibit multicollinearity, some of the vectors in the key matrix are linearly correlated. That is, there exists an index $i$ such that $k_i=\sum_{j\neq i}\alpha_jk_j+\beta$.

Consider the covariance matrix $C$ of $K$, and $C=K^{T}K$, due to the multicollinearity, the rank of the matrix $K$ will satisfy $r(K)=r<n$, then we have:
\begin{equation}  \footnotesize
    r(C)=r(K^{T}K)\leq r(K) < n
\end{equation}

The matrix $C$ also satisfies the condition of partial linear correlation. Therefore, $|C|=0$. Since the matrix $C$ is a positive semi-definite matrix, all the eigenvalues of $C$ are greater than or equal to 0. At this time, there will be:
\begin{equation} \footnotesize
    |C|=\lambda_1\lambda_2\cdot\cdot\cdot \lambda_n=0
\end{equation}

So there must exist $\lambda_i$ which satisfies that $\lambda_i=0$, i.e., $\lambda_{min}=0$. 

To demonstrate that multicollinearity leads to numerical instability, we first present the following lemma and theorem. 

\begin{lemma}
    In the sense of the 2-norm, for any matrix $A$, if $\lambda_{max}$ is the largest eigenvalue of $A^TA$, then $||A||_{2}=\sqrt{\lambda_{max}}$.
\end{lemma}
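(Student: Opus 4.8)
The plan is to work directly from the variational (operator-norm) definition of the matrix $2$-norm and reduce the statement to a Rayleigh-quotient maximization over the symmetric positive semi-definite matrix $A^{T}A$. Recall that
\begin{equation} \footnotesize
\|A\|_2 = \max_{x \neq 0} \frac{\|Ax\|_2}{\|x\|_2} = \max_{\|x\|_2=1} \|Ax\|_2 .
\end{equation}
First I would square the objective and rewrite it entirely in terms of $M := A^{T}A$, using $\|Ax\|_2^2 = (Ax)^{T}(Ax) = x^{T}A^{T}A\,x$, so that $\|A\|_2^2 = \max_{\|x\|_2=1} x^{T}M x$. This converts a norm computation into the maximization of a quadratic form over the unit sphere.

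Next I would exploit the structure of $M$. The matrix $M=A^{T}A$ is symmetric, and it is positive semi-definite since $x^{T}M x = \|Ax\|_2^2 \geq 0$ for every $x$ (exactly the non-negativity already observed for the covariance-type matrix $C=K^{T}K$ earlier). The spectral theorem then supplies an orthonormal eigenbasis $v_1,\dots,v_n$ with eigenvalues $\lambda_1 \geq \cdots \geq \lambda_n \geq 0$ and $\lambda_1 = \lambda_{\max}$. Expanding an arbitrary unit vector as $x = \sum_i c_i v_i$, orthonormality gives both $\sum_i c_i^2 = 1$ and $x^{T}M x = \sum_i \lambda_i c_i^2$.

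The decisive step is a two-sided estimate on this diagonalized form. On one hand, $\sum_i \lambda_i c_i^2 \leq \lambda_{\max}\sum_i c_i^2 = \lambda_{\max}$, which bounds the maximum from above. On the other hand, taking $x = v_1$ attains $x^{T}M x = \lambda_{\max}$, so the upper bound is achieved rather than merely approached, giving $\max_{\|x\|_2=1} x^{T}A^{T}A\,x = \lambda_{\max}$. Taking square roots yields $\|A\|_2 = \sqrt{\lambda_{\max}}$.

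I do not expect a genuine obstacle here, as this is the classical identity expressing the spectral norm as the largest singular value of $A$. The only points demanding care are the appeal to the spectral theorem, which is licensed precisely by the symmetry and positive semi-definiteness of $A^{T}A$, and the verification that equality in the Rayleigh bound is genuinely attained at the top eigenvector, so that the maximum exists and equals $\lambda_{\max}$.
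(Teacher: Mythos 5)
Your proposal is correct and follows essentially the same route as the paper's proof: the variational definition of $\|A\|_2$ restricted to unit vectors, the observation that $A^{T}A$ is symmetric positive semi-definite with an orthonormal eigenbasis, and the Rayleigh-quotient estimate $x^{T}A^{T}Ax=\sum_i \lambda_i c_i^2 \leq \lambda_{\max}$. If anything, your write-up is slightly more complete than the paper's, since you explicitly verify that the bound is attained at the top eigenvector $x=v_1$, whereas the paper derives only the inequality $\sqrt{(Ax)^{T}(Ax)}\leq\sqrt{\lambda_{\max}}$ and then asserts equality without exhibiting the maximizer.
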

\begin{proof}

   Arbitrarily select a matrix $A$ and any unit vector $x$ that can be transformed by $A$. According to the definition of the norm and the unit vector, we have:

   \begin{equation}  \footnotesize
       ||A||_{2}=sup_{||x||\neq 0}\frac{||Ax||}{||x||}=sup_{||x||\neq 0}||Ax||
   \end{equation}

Then write $||Ax||$ in another form:

   \begin{equation}  \footnotesize
       ||Ax||=\sqrt{(Ax)^T(Ax)}=\sqrt{x^TA^TAx}
   \end{equation}

Due to $x^TA^TAx\geq 0$, $A^TA$ is a positive semi-definite matrix. Suppose that its $n$ eigenvalues satisfy $\lambda_1 \geq \lambda_2 \geq ... \geq \lambda_n \geq 0$, the corresponding eigenvectors $\alpha_1, \alpha_2,..., \alpha_n$ form an orthonormal basis of $\mathbb{R}^n$. Let $x=\sum_{i=1}^{n}k_i \alpha_i$, since $||x||=1$, so $\sum_{i=1}^{n} k_i^2=1$. Subsequently, we can derive the following equation:

  \begin{equation}  \footnotesize
       A^TAx=A^TA(\sum_{i=1}^n k_i \alpha_i)=\sum_{i=1}^nk_i A^T A \alpha_i =\sum_{i=1}^n\lambda_i k_i \alpha_i
  \end{equation}

   \begin{equation}  \footnotesize
       x^TA^TAx=\sum_{i=1}^{n}\lambda_i k_i^2 \leq \lambda_1
   \end{equation}

   $\lambda_1$ is the largest eigenvalue of $A^TA$, which is marked as $\lambda_{max}$. Then, $\sqrt{(Ax)^T(Ax)} \leq \sqrt{\lambda_{max}}$. 

   Up to this point, we have obtained $||A||_{2}=\sqrt{\lambda_{max}}$.

\end{proof}

\begin{theorem}
    Multicolinearity leads to the condition number of the matrix tending to infinity.
\end{theorem}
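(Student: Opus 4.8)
The plan is to show that the 2-norm condition number $\kappa(C) = \|C\|_2 \, \|C^{-1}\|_2$ of the covariance matrix $C = K^T K$ diverges once multicollinearity drives its smallest eigenvalue to zero. I would take as given the setup already established: $C$ is symmetric positive semi-definite with eigenvalues $\lambda_1 \geq \cdots \geq \lambda_n \geq 0$, and under multicollinearity $\lambda_{min} = \lambda_n = 0$. The Lemma, which provides the link $\|A\|_2 = \sqrt{\lambda_{max}}$ between the operator norm and the spectrum of $A^T A$, is the workhorse of the argument.

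First I would apply the Lemma to $C$ itself. Because $C$ is symmetric, $C^T C = C^2$ has eigenvalues $\lambda_i^2$, so the Lemma gives $\|C\|_2 = \sqrt{\lambda_{max}(C^2)} = \lambda_{max}$. Next I would compute $\|C^{-1}\|_2$ by applying the Lemma to $A = C^{-1}$: since $C^{-1}$ is also symmetric, $(C^{-1})^T C^{-1} = C^{-2}$ has eigenvalues $1/\lambda_i^2$, whose maximum is $1/\lambda_{min}^2$, giving $\|C^{-1}\|_2 = 1/\lambda_{min}$. Multiplying the two norms yields the clean expression $\kappa(C) = \lambda_{max}/\lambda_{min}$, reducing the theorem to a statement about this single ratio.

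Finally I would let multicollinearity enter. Treating exact collinearity as the limit of near-collinear configurations, as the key vectors approach linear dependence one has $\lambda_{min} \to 0^+$ while $\lambda_{max}$ remains bounded below by a positive constant, so $\kappa(C) = \lambda_{max}/\lambda_{min} \to +\infty$. This is exactly the assertion that multicollinearity sends the condition number to infinity.

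The main obstacle is the well-definedness of $\|C^{-1}\|_2$ at exact collinearity: when $\lambda_{min} = 0$ the matrix $C$ is singular and $C^{-1}$ does not exist, so the product $\|C\|_2 \, \|C^{-1}\|_2$ is literally undefined. I would resolve this either by phrasing the conclusion as a limit over a sequence of matrices degenerating to the collinear case, or by adopting the standard convention $\kappa(C) = +\infty$ for singular $C$; under both readings the divergence of $\lambda_{max}/\lambda_{min}$ as $\lambda_{min} \to 0^+$ is precisely the content of the theorem. A secondary point to pin down is that $\lambda_{max}$ does not itself vanish in the limit, which holds generically because the collapsing directions suppress only the small eigenvalues while leaving the dominant signal direction intact.
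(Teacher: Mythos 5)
Your proposal is correct and follows essentially the same route as the paper: both invoke the Lemma to express the 2-norm condition number as a ratio of extreme eigenvalues and conclude divergence as $\lambda_{min} \to 0$ (you apply it to $C$ itself, obtaining $\lambda_{max}/\lambda_{min}$, while the paper applies it to $A$ with $A^TA = C$, obtaining $\sqrt{\lambda_{max}/\lambda_{min}}$ --- a cosmetic difference since they diverge together). In fact your treatment is slightly more careful than the paper's, which writes $k(A) \to \infty$ without addressing that $C^{-1}$ fails to exist at exact collinearity or that $\lambda_{max}$ must stay bounded away from zero; your limit-based reading (or the convention $\kappa = +\infty$ for singular matrices) cleanly patches both points.
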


\begin{proof}
    According to Lemma 1, we have $||A||_{2}=\sqrt{\lambda_{max}}$ and $||A^{-1}||_{2}=\frac{1}{\sqrt{\lambda_{min}}}$.

    \begin{equation}  \footnotesize
        k(A)=||A||_{2}\cdot ||A^{-1}||_{2}=\frac{\sqrt{\lambda_{max}}}{\sqrt{\lambda_{min}}}\rightarrow \infty
    \end{equation}

\end{proof}

This means that multicollinearity will lead to an excessively large condition number of the matrix, which in turn poses a potential risk of numerical instability \cite{zhang2024geometric}.

\textbf{Constructing transparent models with prior knowledge.}
Transparent models constructed based on prior knowledge have attracted widespread attention due to their computational efficiency and interpretability.
However, transparent models also have their limitations.  Biological constraints within transparent models may diminish their predictive performance. For example, Hard-coding TF-TF interactions may underperform unconstrained models. (although the use of milder regularization techniques can mitigate this decrease to some extent). The following serves as an illustration of this. Suppose that the DNA samples input into two models are identical, with all other confounding factors excluded. The loss function for the hard-coded transparent model is defined as follows:
\begin{equation} \footnotesize
L_{\text{hard}} = -\sum_{i=1}^{N} m_i \log(p_i^{\text{hard}}) + \lambda \| W_{\text{prior}} - W_{\text{ref\_prior}} \|_2 + \gamma \| W_{\text{adapt}} \|_2
\end{equation}

There, \( N \) denotes the number of input samples, \( m_i \) is the true label of the \( i \)-th sample, and \( p_i^{\text{hard}} \) represents the predicted probability of the \( i \)-th sample by the hard-coded model. This prediction is subject to the constraint \( p_i^{\text{hard}} = f(W_{\text{prior}}, W_{\text{adapt}}, m_i) \). The regularization coefficient \( \lambda \) is associated with the prior constraint term, while \( \gamma \) corresponds to the regularization coefficient for the adjustable parameters in the hard-coded model, which usually has a smaller value. \( W_{\text{prior}} \) signifies the fixed weights in the hard-coded model, \( W_{\text{ref\_prior}} \) is the preset baseline weight reference value, and \( W_{\text{adapt}} \) is the weight in the hard-coded model that can be freely adjusted. The loss function of the hard-coded model is primarily determined by the first term, which is the cross-entropy loss.
The loss function for the unconstrained model is defined as follows:

\begin{equation}  \footnotesize
L_{\text{free}} = -\sum_{i=1}^{N} m_i \log(p_i^{\text{free}}) + \beta \| W_{\text{free}} \|_2
\end{equation}

In comparison, the predicted probability \( p_i^{\text{free}} = f(W_{\text{free}}, m_i) \) of the unconstrained model is determined by the free parameters \( W_{\text{free}} \). Here, \( \beta \) is the regularization coefficient for the unconstrained model (with a smaller value). Similarly, the loss function of the unconstrained model is primarily determined by the first term, which is the cross-entropy loss. Due to the limitations imposed by hard coding, it holds that \( p_i^{\text{hard}} \leq p_i^{\text{free}} \).
To demonstrate that the performance of the hard-coded model may potentially decrease compared to the unconstrained model, it is only necessary to prove that the difference between the loss functions of the hard-coded model and the unconstrained model is not less than zero.
\begin{align*} 
L_{\text{hard}} - L_{\text{free}} &= \sum_{i=1}^{N} m_i \left( \log \frac{p_i^{\text{free}}}{p_i^{\text{hard}}} \right) + \lambda \| W_{\text{prior}} - W_{\text{ref\_prior}} \|_2 \\
&\quad + \gamma \| W_{\text{adapt}} \|_2 - \beta \| W_{\text{free}} \|_2 
\end{align*}

Where \( \sum_{i=1}^{N} m_i \left( \log \frac{p_i^{\text{free}}}{p_i^{\text{hard}}} \right) \geq 0 \), \( \lambda \| W_{\text{prior}} - W_{\text{ref\_prior}} \|_2 \geq 0 \), and \( \gamma \| W_{\text{adapt}} \|_2 - \beta \| W_{\text{free}} \|_2 \approx 0 \). From this, it can be concluded that \( L_{\text{hard}} - L_{\text{free}} \geq 0 \), which proves that the hard-coded model may perform worse than the unconstrained model.

\section{Future Directions}

Although we have theoretically analyzed the problems existing in various interpretable methods, how to optimize and improve these methods with limitations remains a direction worthy of exploration. 

Overall, on the one hand, research demonstrates that existing interpretation approaches, including attention-based visualization techniques, feature importance analysis, and causal reasoning-exhibit significant variations in terms of information presentation formats, analytical granularity, and reliability. Such heterogeneity not only poses substantial challenges for researchers in selecting and applying appropriate methods, but may also potentially compromise the credibility of model-derived decisions to some extent. Consequently, establishing a systematic evaluation framework for explainable methods becomes particularly crucial. Future research should focus on developing comprehensive and objective assessment metrics to construct a unified benchmarking standard, thereby effectively enhancing the reliability and practical utility of explainability studies.

On the other hand, we expect that interpretable information will not rely solely on a single method or a certain input information. With the development of AI and the improvement of computing power, multimodal technology is thriving. In the future, multimodal technology can be applied to integrate various forms of input data, such as gene sequence and protein structures. At the same time, by comprehensively considering the information provided by different types of interpretation methods, the understanding and reasoning abilities of the model can be enhanced, providing a more reliable guarantee for genetic predictions. 

\section{Conclusions}
With the extensive application of deep learning in the field of bioinformatics, reliable interpretability has become increasingly crucial for understanding the prediction results.  
This article provides a systematic overview of interpretability methods for deep learning applied in the field of bioinformatics and then categorizes these methods into two primary types: input interpretability and model interpretability, further detailing various interpretive approaches within each category. Additionally, we theoretically analyze the limitations of these interpretive methods and provide guidance for future research on interpretability in the field of genetics.







\bibliography{mybibfile}

\end{document}